\journal{}
\newcommand\finprovisoire{
\bibliographystyle{elsarticle-num}
\bibliography{/Users/bournez/bibliographie/Bib-Files/bournez,/Users/bournez/bibliographie/Bib-Files/perso}
\end{document}
}
\newtheorem{remark}{Remark}
\newtheorem{definition}{Definition}
\newtheorem{example}{Example}
\newtheorem{theorem}{Theorem}
\newtheorem{proposition}{Proposition}
\newtheorem{corollary}{Corollary}
\newcommand\vx{{x}}
\newcommand\vy{{y}}
\newcommand\vz{{z}}
\newcommand\R{\mathbb{R}}
\newcommand\motnouv[1]{\emph{#1}}
\title{Population Protocols that Correspond to Symmetric Games\tnoteref{label1}}
\author[lix]{Olivier Bournez}\ead{Olivier.Bournez@lix.polytechnique.fr}
\author[mrs]{J\'er\'emie Chalopin}\ead{Jeremie.Chalopin@lif.univ-mrs.fr}
\author[prism]{Johanne Cohen}\ead{Johanne.Cohen@prism.uvsq.fr}
\author[liafa]{Xavier Koegler}\ead{Xavier.Koegler@liafa.jussieu.fr}
\address[lix]{Ecole Polytechnique \& Laboratoire d'Informatique (LIX),\\ 91128 Palaiseau Cedex, France}
\address[mrs]{CNRS \& Laboratoire d'Informatique Fondamentale de Marseille, CNRS \& Aix-Marseille Universit{\'e},\\ 39 rue Joliot Curie, 13453 Marseille Cedex 13, France}
\address[prism]{{CNRS \& PRiSM},\\ 45 Avenue des Etats Unis, 78000 Versailles, France}
\address[liafa]{\'{E}cole Normale Supérieure \& Université Paris Diderot - Paris 7,\\ Case 7014, 75205 Paris Cedex 13, France}
\begin{document}

\begin{frontmatter}


\begin{abstract}
  Population protocols have been introduced by Angluin et {al.} as a model of 
 networks consisting of very limited mobile agents that interact in pairs but with no control
 over their own movement. A collection of anonymous agents, modeled
 by finite automata, interact pairwise according to some rules that update their states.

 The model has been considered as a computational model in several papers. Input values are initially distributed among the agents, and the agents must eventually converge to the the correct output.  Predicates on the initial configurations that
  can be computed by such protocols have been characterized under
  various hypotheses. The model has initially been motivated by sensor-networks, but it  can be seen more generally as a model of networks of anonymous agents interacting pairwise. This includes sensor networks, ad-hoc networks, or models from chemistry.

  In an orthogonal way, several distributed systems have been termed in literature as being realizations of games  in the sense of game theory. In this paper, we investigate under which conditions population protocols, or more generally pairwise interaction rules, can be considered as the result of a symmetric game. We prove that not all rules can be considered as symmetric games.
  We characterize the computational power of symmetric games. We prove that they have very limited power: they can count until $2$, compute majority, but they can not even count until $3$.

  As a side effect of our study, we also prove that any population protocol can be simulated by a symmetric one (but not necessarily a game).  \end{abstract}

\begin{keyword}
Population Protocols \sep Computation Theory \sep Distributed Computing \sep Algorithmic Game Theory



\end{keyword}

\end{frontmatter}

\section{Introduction}

The computational power of networks of anonymous resource-limited
mobile agents has been investigated recently. 

In
particular, Angluin et al.\ proposed in \cite{AspnesADFP2004} a model of distributed 
computations. In this model, called \motnouv{population
protocols}, finitely many finite-state agents interact in pairs
chosen by an adversary. Each interaction has the effect of updating
the state of the two agents according to a joint transition function.

A protocol is said to \motnouv{(stably) compute} a predicate on the initial states
of the agents if, in any fair execution, after finitely many
interactions, all agents reach a common output that corresponds to the
value of the predicate.

The model was originally proposed to model computations realized by
sensor networks in which passive agents are carried along by other
entities. The canonical example of \cite{AspnesADFP2004} corresponds to
sensors attached to a flock of birds and that must be programmed to
check some global properties, like determining whether more than 5\%
of the population has elevated temperature. Motivating scenarios also
include models of the propagation of trust 
\cite{diamadi2001sgs}.

Much of the work so far on population protocols has concentrated on
characterizing which predicates on the initial states can be computed
in different variants of the model and under various assumptions. In
particular, the  predicates computable by the unrestricted 
population protocols from \cite{AspnesADFP2004} have been characterized
as being precisely the semi-linear predicates, that is those
predicates on counts of input agents definable in first-order
Presburger arithmetic \cite{presburger:uvk}. Semi-linearity was shown
to be sufficient in \cite{AspnesADFP2004} and necessary in
\cite{AngluinAE2006semilinear}.

Variants considered so far include restriction to one-way
communications, restriction to particular interaction graphs, to
random interactions, with possibly various kind of failures of
agents. Solutions to classical problems of distributed algorithmics
have also been considered in this model. Refer to 
\cite{PopProtocolsEATCS} for a survey and a complete discussion.

The population protocol model shares many features with other models already considered in the literature. In particular, models of pairwise interactions have been used to study the propagation of diseases \cite{Heth00}, or rumors \cite{dk65}. In chemistry, the chemical master equation has been justified using (stochastic) pairwise interactions between the finitely many molecules  \cite{Murray-VolI,gillespie1992rdc}. In that sense, the model of population protocols may be considered as fundamental in several fields of study, since it appears as soon as anonymous agents interact pairwise.
 
In an orthogonal way, pairwise interactions between finite-state agents are sometimes motivated by the study of the dynamics of particular two-player games from game theory. For example, paper \cite{Ref9deFMP04} considers the dynamics of the so-called $PAVLOV$ behavior in the iterated prisoner lemma. Several results about the time of convergence of this particular dynamics towards the stable state can be found in \cite{Ref9deFMP04}, and \cite{FMP04}, for rings, and complete graphs.

The purpose of this article is to better understand whether and when pairwise interactions, and hence population protocols, can be considered as the result of a game.
We want to understand if restricting to rules that come from a symmetric game is a limitation, and in particular whether restricting to rules that can be termed $PAVLOV$ in the spirit of \cite{Ref9deFMP04} is a limitation.  We do so by giving solutions to several basic problems using rules of interactions associated to a symmetric game, and by characterizing they power: We prove that they can count until $2$, they can compute $MAJORITY$, but they can not even count until $3$.

As such protocols must also be symmetric, we are also discussing whether restricting to symmetric rules in population protocols is a limitation. We prove that any population protocol can be simulated by a symmetric one (but not necessarily a game).
 
In Section \ref{section:pp}, we briefly recall population protocols. In Section \ref{section:gametheory}, we recall some basics from game theory. In Section \ref{sec:gamepp}, we discuss how a game can be turned into a dynamics, and introduce the notion of {Pavlovian} population protocol. In Section \ref{sec:results} we prove that any symmetric deterministic 2-states population protocol is Pavlovian, and that the problem of computing the OR, AND, as well as the leader election and majority problem admit Pavlovian solutions. We then characterize there power by proving that they can count until $2$, but they can not count until $3$ in Section   \ref{sec:bound}. We prove that symmetric population protocols, unlike the restricted class of Pavlovian population protocols can compute all semi-linear predicate in Section \ref{sec:fin}.

\paragraph{Related work}
Population protocols have been introduced in \cite{AspnesADFP2004}, and proved to compute all semi-linear predicates. They have been proved not to be able to compute more in \cite{AngluinAE2006semilinear}. Various restrictions on the initial model have been considered up to now. 
An (almost) up to date survey can be found in \cite{PopProtocolsEATCS}. 

Variants include discussions about the influence of removing the assumption of two-way interaction: One-way interaction models include variants where agents communicate by anonymous message-passing, with immediate delivery or delayed delivery, or where agents can record it has sent a message, or queue incoming messages \cite{angluin2007cpp}. However, as far as we know, the constraint of restricting to symmetric rules has not been yet explicitly considered, nor restricting to rules that correspond to games in the population protocol literature.

More generally, population protocols arise as soon as populations of anonymous agents interact in pairs. Our original motivation was to consider rules corresponding to two-players games, and population protocols arose quite incidentally. The main advantage of the  \cite{AspnesADFP2004} settings is that it provides a clear understanding of what is called a computation by the model. Many distributed systems have been described as the result of games, but as far as we know there has not been attempts to characterize what can be computed by games in the spirit of this computational model.

In this paper, we turn two players games into dynamics over agents, by considering $PAVLOV$ behavior. This is inspired by \cite{Ref9deFMP04,FMP04,kraines1988psd} that consider the dynamics of a particular set of rules termed the $PAVLOV$ behavior in the iterated prisoner lemma. The $PAVLOV$ behavior is sometimes also termed \textit{WIN-STAY, LOSE-SHIFT} \cite{nowak1993sws,axelrod:1984:ec}. Notice, that we extended it from two-strategies two-players games to n-strategies two-players games, whereas above references only talk about two-strategies two-players games, and mostly of the iterated prisoner lemma.

This is clearly not the only way to associate a dynamic to a game. They are several famous classical approaches: The first consists in repeating games: see for example \cite{LivreGameTheory,LivreBinmore}. The second in using models from evolutionary game theory: refer to \cite{Evolutionary1,LivreWeibull} for a presentation of this latter approach. The approach considered here falls in method that consider dynamics obtained by selecting at each step some players and let them play a fixed game.  Alternatives to $PAVLOV$ behavior could include $MYOPIC$ dynamics (at each step each player chooses the best response to previously played strategy by its adversary), or the well-known and studied $FICTIOUS-PLAYER$ dynamics (at each step each player chooses the best response to the statistics of the past history of strategies played by its adversary). We refer to \cite{theorylearninggames,LivreBinmore} for a presentation of results known about the properties of the obtained dynamics according to the properties of the underlying game. This is clearly non-exhaustive, and we refer to \cite{axelrod:1984:ec} for an incredible zoology of possible behaviors for the particular iterated prisoner lemma game, with discussions of their compared merits in experimental tournaments.

We obtain a characterization of the power of Pavlovian population protocols in terms of closure properties that show that they can count until $2$, but not until $3$. Notice that several variants of (one-way) population protocols have been characterized in \cite{angluin2007cpp} in a $COUNT_k$ hierarchy. The class obtained here seems close to the $COUNT_2$ level of this latter hierarchy \cite{angluin2007cpp}. However, on one hand, this is not exactly this class (for example $MAJORITY$ is computable but not in the $COUNT_2$ level), and on the other hand, as no class there is formally proved to correspond to $COUNT_2$, this shows that the class considered here is different, and not reducible to the variants of \cite{angluin2007cpp}.

Notice that a preliminary version of this article has been presented in \textit{Complexity of Simple Programs CSP'08}. Compared to this conference version, we simplified some constructions, we added a few protocols, we extended deeply related work discussions, and mainly, we solved the statements conjectured there: we provide here a characterization of the power of Pavlovian population protocols, whereas it was open at the time of the presentation of this preliminary version.

\section{Population Protocols}
\label{section:pp}

A protocol \cite{AspnesADFP2004,PopProtocolsEATCS} is given by $(Q,\Sigma,\iota,\omega,\delta)$ with the
following components. $Q$ is a finite set of \motnouv{states}.
$\Sigma$ is a finite set of \motnouv{input symbols}.  $\iota: \Sigma
\to Q$ is the initial state mapping, and $\omega: Q \to \{0,1\}$ is
the individual output function. $\delta \subseteq Q^4$ is a joint
transition relation that describes how pairs of agents can
interact. Relation $\delta$ is sometimes described by listing all
possible interactions using the notation $(q_1,q_2) \to (q'_1,q'_2)$,
or even the notation $q_1q_2 \to q'_1 q'_2$, 
for $(q_1,q_2,q'_1,q'_2) \in \delta$ (with the convention that
$(q_1,q_2) \to (q_1,q_2)$ when no rule is specified with $(q_1,q_2)$
in the left-hand side). The protocol is termed \motnouv{deterministic}
if for all pairs $(q_1,q_2)$ there is only one pair $(q'_1,q'_2)$ with
$(q_1,q_2) \to (q'_1,q'_2)$. In that case, we write
$\delta_1(q_1,q_2)$ for the unique $q'_1$ and $\delta_2(q_1,q_2)$ for
the unique $q'_2$.   

Notice that, in general, rules can be non-symmetric: if
$(q_1,q_2) \to (q'_1,q'_2)$, it does not necessarily follow that $(q_2,q_1) \to
(q'_2,q'_1)$.  

Computations of a protocol proceed in the following way. The
computation takes place among $n$ \motnouv{agents}, where $n \ge 2$. A
\motnouv{configuration} of the system can be described by a vector of
all the agents' states. The state of each agent is an element of $Q$. Because agents
with the same states are indistinguishable, each configuration can be
summarized as an unordered multiset of states, and hence of elements
of $Q$.

Each agent is given initially some input value from $\Sigma$: Each agent's initial
state is determined by applying $\iota$ to its input value. This
determines the initial configuration of the population.

An execution of a protocol proceeds from the initial configuration by
interactions between pairs of agents. Suppose that two agents in state
$q_1$ and $q_2$ meet and have an interaction. They can change into
state $q'_1$ and $q'_2$ if $(q_1,q_2,q'_1,q'_2)$ is in the transition
relation $\delta$.  If $C$ and $C'$ are two configurations, we write
$C \to C'$ if $C'$ can be obtained from $C$ by a single interaction of
two agents: this means that $C$ contains two states $q_1$ and $q_2$ and $C'$ is
obtained by replacing $q_1$ and $q_2$ by $q'_1$ and $q'_2$ in $C$,
where $(q_1,q_2,q'_1,q'_2) \in \delta$. An \motnouv{execution} of the
protocol is an infinite sequence of configurations
$C_0,C_1,C_2,\cdots$, where $C_0$ is an initial configuration and $C_i
\to C_{i+1}$ for all $i\ge0$. An execution is \motnouv{fair} if for
all configurations $C$ that appear infinitely often in the execution,
if $C \to C'$ for some configuration $C'$, then $C'$ appears
infinitely often in the execution.

At any point during an execution, each agent's state determines its
output at that time. If the agent is in state $q$, its output value is
$\omega(q)$. The configuration output is $0$ (respectively $1$) if all
the individual outputs are $0$ (respectively $1$). If the individual
outputs are mixed $0$s and $1s$ then the output of the configuration
is undefined. 

Let $p$ be a predicate over multisets of elements of
$\Sigma$. Predicate $p$ can be considered as a function whose range is
$\{0,1\}$ and whose domain is the collection of these multisets. The predicate is said to be computed by the protocol if,  for every  multiset $I$, and
every fair execution that starts from the initial configuration
corresponding to $I$, the output value of every agent eventually
stabilizes to $p(I)$.

Multisets of elements of $\Sigma$ are in clear bijection with elements of $\mathbb{N}^{|\Sigma|}$: a multiset over $\Sigma$ can be identified by a vector of $|\Sigma|$ components, where each component represents the multiplicity of the corresponding element of $\Sigma$ in this multiset. It follows that predicates can also be considered as  functions whose range is $\{0,1\}$ and whose domain is $\mathbb{N}^{|\Sigma|}$. 

The following was then proved in
\cite{AspnesADFP2004,AngluinAE2006semilinear}.

\begin{theorem}[\cite{AspnesADFP2004,AngluinAE2006semilinear}] A
  predicate is computable in the population protocol model if and only
  if it is semilinear.
\end{theorem}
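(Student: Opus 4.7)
The theorem has two independent directions which I would prove by very different techniques. For the \emph{if} direction, the plan is to proceed by structural induction on the shape of the semilinear predicate. First I would check that the class of computable predicates is closed under Boolean operations: given two protocols $\mathcal{P}_1,\mathcal{P}_2$, run them in parallel on the product state space $Q_1\times Q_2$, updating both components at each interaction, and combine the output functions via $\lnot$, $\land$, or $\lor$; negation alone is just a relabelling of $\omega$. Since every semilinear set is a finite Boolean combination of threshold predicates $\sum_i a_i x_i\geq c$ and modular predicates $\sum_i a_i x_i\equiv c\pmod m$, it then suffices to construct protocols for these two basic families. For the modular case I would elect a unique leader via the standard $LL\to LN$ rule and have the leader maintain a counter modulo $m$ that is updated by $a_\sigma$ each time it absorbs an as-yet-uncounted agent of input $\sigma$, with non-leaders simply copying the leader's current output. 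For the threshold case, the same leader-based scheme works if the counter is saturated at $\pm(c+1)$, so that only finitely many states are ever used.

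For the \emph{only if} direction, one fixes a protocol computing $p$ and must prove that $p^{-1}(1)\subseteq\mathbb{N}^{|\Sigma|}$ is semilinear. The plan is to view the step relation of the protocol as a vector addition (Petri-net-like) relation on multisets of states, and to combine fairness with a pumping argument. If $I$ is an accepted input and $|I|$ is large enough, any fair execution from $\iota(I)$ eventually stabilizes to output $1$; inside such an execution, Dickson's lemma on $\mathbb{N}^Q$ forces a ``repeatable block'' whose net effect is some vector $v\in\mathbb{Z}^Q$ that preserves membership in the accepting basin. By extracting finitely many such vectors $v_1,\dots,v_r$ and finitely many base inputs $b_1,\dots,b_s$, one argues that $p^{-1}(1)$ is a finite union of linear sets $\{\, b_i+\sum_j k_j v_j : k_j\in\mathbb{N}\,\}$, hence semilinear.

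The main obstacle is this \emph{only if} direction: one must ensure that the net effect of a repeatable block is not only reachable but \emph{fair-reachable} in a way that preserves the stable output. This requires ruling out subtle ``traps'' where inputs look accepting along short executions but fail along longer ones, and it is handled by carefully combining the well-quasi-order argument with the fairness condition to show that every reachable configuration along an accepting run can actually be driven to a stable output-$1$ configuration. This interplay between reachability and fair reachability is the technical heart of the Angluin--Aspnes--Eisenstat argument and cannot be sidestepped by the simpler vector-addition machinery alone.
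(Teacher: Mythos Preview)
The paper does not prove this theorem at all: it is stated as a known result and attributed to \cite{AspnesADFP2004,AngluinAE2006semilinear}, with the sentence ``The following was then proved in \cite{AspnesADFP2004,AngluinAE2006semilinear}'' immediately preceding it. There is therefore no ``paper's own proof'' to compare your proposal against.

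That said, your sketch is a faithful high-level summary of how the cited references establish the result. The \emph{if} direction in \cite{AspnesADFP2004} is indeed built from closure under Boolean combinations together with explicit protocols for threshold and modular predicates, much as you outline (though the original constructions do not all route through a single leader). The \emph{only if} direction in \cite{AngluinAE2006semilinear} is, as you correctly anticipate, the hard part; your description of extracting pumpable vectors via Dickson's lemma and then arguing about fair reachability is the right intuition, but the actual argument is considerably more delicate than a single pumping step: it goes through a characterization of output-stable configurations and uses monotonicity of reachability under multiset inclusion in a way that your sketch only gestures at. If this were meant as a self-contained proof rather than a citation, the gap would lie precisely where you flag it --- showing that the extracted linear sets exactly cover $p^{-1}(1)$ and not more --- and that would need to be filled in.
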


Recall that semilinear sets are known to correspond to predicates on
counts of input agents definable in first-order Presburger arithmetic
\cite{presburger:uvk}.

We will use the following notation as in \cite{angluin2007cpp}: the set of all functions from a set $X$ to a set $Y$ is denoted by $Y^X$. Let $\Sigma$ be a finite non-empty set. For all $f,g \in \R^ E$, we define the usual vector space operations 
$$\begin{array}{llll}
(f+g)(\sigma) &=& f(\sigma) + g(\sigma)& \mbox{ for all }  \sigma \in \Sigma \\
(f-g)(\sigma) &=& f(\sigma) - g(\sigma)& \mbox{ for all }  \sigma \in \Sigma \\
(cf)(\sigma) &=& c f(\sigma) & \mbox{ for all }  \sigma \in \Sigma , c \in \R\\
(f.g)(\sigma) &=& \sum_{\sigma} f(\sigma) g(\sigma).& \\
\end{array}
$$

Abusing notation as in \cite{angluin2007cpp}, we will write $\sigma$ for the function $\sigma(\sigma')=[\sigma=\sigma']$, for all $\sigma' \in \Sigma$, where $[condition]$ is $1$ if condition is true, $0$ otherwise.

\section{Game Theory}
\label{section:gametheory}

We now recall the simplest concepts from Game Theory. We focus on non-cooperative games, with complete information, in extensive form. 

The simplest game is made up of two players, called $I$ and $II$, with a
finite set of options, called \emph{pure strategies}, $Strat(I)$ and
$Strat(II)$. Denote by $A_{i,j}$ (respectively: $B_{i,j}$) the score
for player $I$
(resp. $II$) when $I$ uses strategy $i \in Strat(I)$ and $II$ uses strategy
$j \in Strat(II)$.
 
The scores are given by $n \times m$ matrices $A$ and $B$, where $n$ and
$m$ are the cardinality of $Strat(I)$ and $Strat(II)$. The game is
termed \emph{symmetric} if $A$ is the transpose of $B$: this implies
that $n=m$, and we can assume without loss of generality that
$Strat(I)=Strat(II)$. 

In this paper, we will restrict to symmetric games.

\begin{example}[Prisoner's dilemma]
The case where $A$ and $B$ are the following matrices 

$$A = \left(
\begin{array}{ll}
R & S \\
T & P \\
\end{array}
\right)
, 
B= 
\left(
\begin{array}{ll}
R & T \\
S & P \\
\end{array}
\right)
$$
with $T>R>P>S$ and $2R >T+S$, is called the \emph{prisoner's dilemma}. We denote by $C$ (for
cooperation) the first pure strategy, and by $D$ (for defection) the
second pure strategy of each player.

As the game is symmetric, matrix $A$ and $B$ can also be denoted by:

\begin{center}
\begin{tabular}{llcr} 
  &   & \multicolumn{2}{c}{Opponent}  \\
  &  &  { C} &{ D } \\ \cline{3-4}
\multirow{2}{*}{Player} & C &  \multicolumn{1}{|c}{$R$} &\multicolumn{1}{r|}{$S$}\\
                        & D &  \multicolumn{1}{|c}{$T$} &\multicolumn{1}{r|}{$P$}\\ \cline{3-4}
\end{tabular}
\end{center}
\end{example}





A strategy $x \in Strat(I)$ is said to be a best response to strategy
$y \in Strat(II)$, denoted by $x \in BR(y)$ if
\begin{equation}
A_{z,y} \le A_{x,y}
\end{equation}
for all strategies $z \in Strat(I)$.


A pair $(x,y)$ is a \emph{(pure) Nash equilibrium} if $x \in
BR(y)$ and $y \in BR(x)$. A pure Nash equilibrium does not always
exist. 


In other words, two strategies $(x,y)$ form a Nash equilibrium if
in that state neither of the players has a unilateral interest to deviate
from it.


\begin{example}
  On the example of the prisoner's dilemma, $BR(\vy)=D$ for all
  $\vy$, and $BR(\vx)=D$ for all $\vx$. So $(D,D)$ is the unique
  Nash equilibrium, and it is pure. In it, each player has score
  $P$. The well-known paradox is that if they had played $(C,C)$ (cooperation)
  they would have had score $R$, that is more. The social optimum
  $(C,C)$, is different from the equilibrium that is reached by
  rational players $(D,D)$, since in any other state, each player
  fears that the adversary plays $C$.
\end{example}

We will also introduce the following definition: Given some strategy $\vx' \in Strat(I)$, a strategy $\vx \in Strat(I)$ is said to be a best response to strategy
$\vy \in Strat(II)$ among those different from $\vx'$, denoted by $\vx
\in BR_{\neq \vx'}(\vy)$ if
\begin{equation}
A_{z,y} \le A_{x,y}
\end{equation}
for all strategy $\vz \in Strat(I), \vz \neq \vx'$. 

Of course, the roles of $II$ and $I$ can be inverted in the previous definition.


There are two main approaches to discuss dynamics of games. The first
consists in repeating games \cite{LivreGameTheory,LivreBinmore}. The second in using models from
evolutionary game theory. Refer to \cite{Evolutionary1,LivreWeibull}
for a presentation of this latter approach.


\paragraph{Repeating Games}

Repeating $k$ times a game, is equivalent to extending the space of
choices into $Strat(I)^k$ and $Strat(II)^k$: player $I$ (respectively
$II$) chooses his or her action $\vx(t) \in Strat(I)$, (resp. $\vy(t) \in
Strat(II)$) at time $t$ for $t=1,2,\cdots,k$.  Hence, this is
equivalent to a two-player game with respectively $n^k$ and $m^k$
choices for players.

To avoid confusion, we will call
\emph{actions} the choices $\vx(t),\vy(t)$ of each player at a given time,
and \emph{strategies} the sequences $X=\vx(1),\cdots,\vx(k)$ and
$Y=\vy(1),\cdots,\vy(k)$, that is to say the strategies for the global
game.

If the game is repeated an infinite number of times, a strategy
becomes a function from integers to the set of actions, and the game
is still equivalent to a two-player game\footnote{but whose matrices
  are infinite.}.

\paragraph{Behaviors}

In practice, player $I$ (respectively $II$) has to solve the following
problem at each time $t$: given the history of the game up to now,
that is to say $$X_{t-1}=\vx(1),\cdots,\vx(t-1)$$ and
$$Y_{t-1}=\vy(1),\cdots,\vy(t-1)$$ what should I (resp. II) play at time $t$? In
other words, how to choose $\vx(t) \in Strat(I)$?  (resp. $\vy(t) \in
Strat(II)$?)

Is is natural to suppose that this is
given by some behavior rules: $$\vx(t)=f(X_{t-1},Y_{t-1}),$$
$$\vy(t)=g(X_{t-1},Y_{t-1})$$ for some particular functions $f$ and $g$.

\paragraph{The Specific Case of the Prisoner's Lemma} 
The question of the best behavior rule to use for the prisoner lemma
gave birth to an important literature. In particular, after the book
\cite{axelrod:1984:ec}, that describes the results of tournaments of
behavior rules for the iterated prisoner lemma, and that argues that
there exists a best behavior rule called $TIT-FOR-TAT$.
This  consists in cooperating at the first step,
and then do the same thing as the adversary at subsequent times. %

A lot of other behaviors, most of them with very
picturesque names have been proposed and studied: see for example
\cite{axelrod:1984:ec}.

Among possible behaviors there is $PAVLOV$ behavior: in the iterated
prisoner lemma, a player cooperates if and only if both players opted
for the same alternative in the previous move. This name 
\cite{axelrod:1984:ec,kraines1988psd,nowak1993sws} stems from the fact that this strategy embodies
an almost reflex-like response to the payoff: it repeats its former
move if it was rewarded by $R$ or $T$ points, but switches behavior if
it was punished by receiving only $P$ or $S$ points. Refer to
\cite{nowak1993sws} for some study of this strategy in the spirit of
Axelrod's tournaments.

The $PAVLOV$ behavior can also be termed \textit{WIN-STAY, LOSE-SHIFT} since
if the play on the previous round results in a success, then the
agent plays the same strategy on the next round. Alternatively, if the
play resulted in a failure the agent switches to another action
\cite{axelrod:1984:ec,nowak1993sws}.

\paragraph{Going From $2$ Players to $N$ Players} $PAVLOV$ behavior is Markovian: a behavior $f$ is \emph{Markovian}, if
$f(X_{t-1},Y_{t-1})$ depends only on $\vx(t-1)$ and $\vy(t-1)$.

From such a behavior, it is easy to obtain a distributed
dynamic. For example, let's follow \cite{Ref9deFMP04}, for the
prisoner's dilemma.

Suppose that we have a connected graph $G=(V,E)$, with $N$
vertices. The vertices correspond to players. An instantaneous
configuration of the system is given by an element of $\{C,D\}^N$,
that is to say by the state $C$ or $D$ of each vertex. Hence, there
are $2^N$ configurations.

At each time $t$, one chooses randomly and uniformly one edge $(i,j)$
of the graph. At this moment, players $i$ and $j$ play the prisoner
dilemma with the $PAVLOV$ behavior. It is easy to see that this
corresponds to executing the following rules:

\begin{equation} \label{eq:pavlov}
\left\{
\begin{array}{lll}
CC &\to& CC \\
CD &\to& DD \\
DC &\to& DD \\
DD &\to& CC. \\
\end{array}
\right.
\end{equation}

What is the final state reached by the system?  The underlying model is a very large Markov chain with $2^N$ states. The
state $E^*=\{C\}^N$ is absorbing. If the graph $G$ does not have any
isolated vertex, this is the unique absorbing state, and there exists
a sequence of transformations that transforms any state $E$ into this
state $E^*$.
As a consequence, from well-known classical results in Markov chain theory, whatever  the initial configuration is, with
probability $1$, the system will eventually be in state $E^*$ 
\cite{Markov}. The system is \emph{self-stabilizing}.

Several results about the time of convergence  towards this stable state
can be found in \cite{Ref9deFMP04}, and \cite{FMP04}, for  rings,
and  complete graphs.

What is interesting in this example is that it shows how to go from a 
game, and a behavior to a distributed dynamic on a
graph, and in particular to a population protocol when the graph is
a complete graph.  

\section{From Games To Population Protocols}
\label{sec:gamepp}

In the spirit of the previous discussion, to any symmetric game, we can
associate a population protocol as follows.

\begin{definition}[Associating a Protocol to a Game]
  Assume a symmetric two-player game is given. Let $\Delta$ be
  some threshold. 
  
  The  protocol associated to the game  is a population
  protocol whose set of states is $Q$, where  $Q=Strat(I)=Strat(II)$
  is the set of strategies of the game, and whose transition rules
  $\delta$ are
  given as follows:
$$(q_1,q_2,q'_1,q'_2) \in \delta$$ where
  \begin{itemize}
  \item $q'_1=q_1$ when $M_{q_1,q_2}\ge\Delta$
  \item $q'_1 \in BR_{\neq q_1}(q_2)$ when $M_{q_1,q_2} < \Delta$
  \end{itemize}
and
   \begin{itemize}
  \item $q'_2=q_2$ when $M_{q_2,q_1} \ge \Delta$
  \item $q'_2 \in BR_{\neq q_2}(q_1)$ when $M_{q_2,q_1} < \Delta$,
  \end{itemize}
where $M$ is the matrix of the game.
\end{definition}

\begin{remark} By subtracting $\Delta$ to each entry of the matrix $M$, we can assume without loss of generality that $\Delta=0$. We will do so from now on.
\end{remark}

\begin{definition}[Pavlovian Population Protocol]
A population protocol is \motnouv{Pavlovian} if it can be obtained
from a game as above.
\end{definition}

\begin{remark}
Clearly a Pavlovian population protocol must be \motnouv{symmetric}:
indeed, whenever
$(q_1,q_2,q'_1,q'_2) \in \delta$, one has $(q_2,q_1,q'_2,q'_1) \in
\delta$.
\end{remark}

\section{Some Specific Pavlovian Protocols} 
\label{sec:results}


We now discuss whether assuming protocols Pavlovian is a restriction.

We start by an easy consideration.

\begin{theorem} Any symmetric deterministic $2$-states population protocol is Pavlovian.
\end{theorem}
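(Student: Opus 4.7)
The plan is to show that any symmetric deterministic 2-state protocol can be realized as the Pavlovian dynamics of a suitable $2\times 2$ payoff matrix $M$ with threshold $\Delta = 0$ (we may assume this normalization by the remark following the definition of a Pavlovian protocol).

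First I would exploit the small size of the state space. When $|Q|=2$, writing $Q=\{A,B\}$, the set $BR_{\neq q_1}(q_2)$ is always a singleton: the unique state different from $q_1$. Thus the Pavlovian rule degenerates to the very simple prescription ``each agent keeps its state if its own row/column entry in $M$ is non-negative, and flips to the other state otherwise.'' In particular, the two agents behave independently, and each has exactly two possible behaviors (stay or flip) controlled by the sign of one entry of $M$.

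Next I would analyze what symmetry plus determinism impose on a 2-state protocol. For a diagonal pair $(A,A)$, the symmetry axiom applied to the two identical agents forces that if $(A,A)\to(q'_1,q'_2)$ then also $(A,A)\to(q'_2,q'_1)$, and determinism then forces $q'_1=q'_2$. So $(A,A)$ can only go to $(A,A)$ or $(B,B)$, and likewise for $(B,B)$. For the off-diagonal pair $(A,B)$, all four possible outputs $(A,A)$, $(A,B)$, $(B,A)$, $(B,B)$ are allowed, and the rule on $(B,A)$ is then forced by symmetry.

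Finally I would match signs. Given an arbitrary symmetric deterministic 2-state protocol, pick any real entries for $M$ obeying: $M_{A,A}\ge 0$ iff $(A,A)\to(A,A)$; $M_{B,B}\ge 0$ iff $(B,B)\to(B,B)$; and the signs of $M_{A,B}$ and $M_{B,A}$ chosen so that in the rule $(A,B)\to(q'_1,q'_2)$ the first agent stays iff $q'_1=A$ and the second agent stays iff $q'_2=B$. The four independent sign choices realize exactly the four possible outcomes on $(A,B)$, and by construction the induced rule on $(B,A)$ agrees with the symmetric dual of the given one.

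I do not expect any genuine obstacle here; the argument is a finite case check enabled by the trivial observation that, with only two states, ``play the best response distinct from my current strategy'' is simply ``flip.'' The only structural fact one must be careful to verify is the impossibility of a symmetric deterministic transition $(A,A)\to(A,B)$ or $(A,A)\to(B,A)$, which is handled in the second step above.
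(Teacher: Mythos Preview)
Your proposal is correct and follows essentially the same approach as the paper: both arguments observe that with two states the Pavlovian update reduces to ``stay if the corresponding matrix entry is non-negative, flip otherwise,'' and then choose the four entries of $M$ with signs $+1$ or $-1$ according to whether the agent stays or flips in the given protocol. Your write-up is somewhat more explicit than the paper's (you spell out why $BR_{\neq q_1}(q_2)$ is a singleton and why symmetry plus determinism force $q'_1=q'_2$ on the diagonal), but the construction is the same.
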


\begin{proof}
  Consider a deterministic symmetric $2$-states population protocol. Note $Q=\{+,-\}$ its set
  of states. Its transition function can be written as follows:
 
\begin{equation}
\left\{
\begin{array}{lll}
++ & \to & \alpha_{++} \alpha_{++}  \\
+- & \to &  \alpha_{+-} \alpha_{-+}\\
-+ & \to & \alpha_{-+} \alpha_{+-}\\
-- & \to & \alpha_{--}  \alpha_{--}\\
\end{array}
\right.\ 
\end{equation}
for some  $\alpha_{++},\alpha_{+-},\alpha_{-+},\alpha_{--}$.

This corresponds to the symmetric game given by the following pay-off matrix $M$  
\begin{center}
\begin{tabular}{llcc} 
  &   & \multicolumn{2}{c}{Opponent}  \\
  &  &  \multicolumn{1}{c}{ \textsc{+}}     & { \textsc{-} }  \rule[-7pt]{0pt}{20pt}\\ \cline{3-4}
 \multirow{2}{*}{Player} & \textsc{+} &  \multicolumn{1}{|c}{$\beta_{++}$}  &\multicolumn{1}{c|}{$\beta_{+-}$}\rule[-7pt]{0pt}{20pt}\\
                        & \textsc{-}  &  \multicolumn{1}{|c}{$\beta_{-+}$}  &\multicolumn{1}{c|}{$\beta_{--}$}\rule[-7pt]{0pt}{20pt}\\ \cline{3-4}
\end{tabular}
\end{center}
where for all $q_1,q_2 \in \{+,-\}$,
\begin{itemize}
\item  $\beta_{q_1q_2}=1$ if  $\alpha_{q_1 q_2}=q_1$,
\item $\beta_{q_1q_2}=-1$  otherwise.
\end{itemize}


 
\end{proof}

Unfortunately, not all rules correspond to a game.

\begin{proposition}
Some symmetric population protocols are not Pavlovian.
\end{proposition}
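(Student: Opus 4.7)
The plan is to construct an explicit symmetric 3-state deterministic population protocol and show directly that its rule set cannot be realized as the Pavlovian dynamics of any symmetric two-player game. The contradiction will come from extracting a few linear inequalities on the payoff matrix $M$ from a handful of well-chosen rules.

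Concretely, with $Q=\{a,b,c\}$, I would use the symmetric transitions
\[
(a,a)\to(b,b),\qquad (a,b)\to(c,c),\qquad (b,a)\to(c,c),
\]
and declare every other ordered pair of states to map to itself. This is manifestly a symmetric population protocol, since swapping the two components of any rule yields another rule of the same set.

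Assume, for contradiction, that this protocol arises from a symmetric game with payoff matrix $M$ and threshold $\Delta=0$ (using the remark in Section~\ref{sec:gamepp}). Three observations then suffice. First, from $(c,a)\to(c,a)$ the first agent stays put, so $M_{c,a}\geq 0$. Second, from $(b,a)\to(c,c)$ the first agent switches away from $b$, so $M_{b,a}<0$. Third, from $(a,a)\to(b,b)$ the first agent switches to $b$, so $b\in BR_{\neq a}(a)$; in particular $M_{b,a}\geq M_{c,a}$. Chaining these gives $M_{b,a}\geq M_{c,a}\geq 0>M_{b,a}$, a contradiction, so no matrix $M$ realizes the protocol.

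The main obstacle is really design rather than proof: one must combine in a single rule set a "stay" rule (pinning down a non-negative entry of $M$), a "switch" rule on the corresponding row (forcing a strictly negative entry of $M$), and a "switch to a specific new state" rule that forces these two entries to be comparable in the wrong direction. Once such a triple is found the verification is a short chain of inequalities, and the remaining freedom in choosing the other transitions to be identities poses no difficulty, since each identity rule only contributes additional inequalities of the form $M_{q,q'}\geq 0$, which are compatible with the three binding constraints above.
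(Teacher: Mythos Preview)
Your argument is correct. Both you and the paper proceed by exhibiting an explicit $3$-state symmetric protocol and reading off a handful of incompatible inequalities on a single column of the payoff matrix.

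The specific constructions differ in flavour. The paper's counterexample uses three \emph{switch} rules $\delta_1(q_0,q_0)=q_1$, $\delta_1(q_1,q_0)=q_2$, $\delta_1(q_2,q_0)=q_0$; the best-response conditions then force the cyclic ordering $M_{q_1,q_0}>M_{q_2,q_0}>M_{q_0,q_0}>M_{q_1,q_0}$, a contradiction purely among orderings. Your example instead combines a \emph{stay} rule $(c,a)\to(c,a)$ (yielding $M_{c,a}\ge 0$) with a \emph{switch} rule $(b,a)\to(c,c)$ (yielding $M_{b,a}<0$), and then the best-response constraint from $(a,a)\to(b,b)$ pins $M_{b,a}\ge M_{c,a}$. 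So your contradiction pits the threshold against the ordering rather than cycling three orderings. A small bonus of your route is that the chain $0>M_{b,a}\ge M_{c,a}\ge 0$ already fails with non-strict inequalities, so you do not need to invoke determinism to make any best response unique; the paper's cycle needs the strict inequalities that determinism provides.
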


\begin{proof}
Consider for example a deterministic $3$-states population protocol
with set of states $Q=\{q_0,q_1,q_2\}$ and a joint transition function
$\delta$ such that $\delta_1(q_0,q_0)=q_{1}$,
$\delta_1(q_1,q_0)=q_{2}$ , $\delta_1(q_2,q_0)=q_{0}$.

Assume by contradiction that there exists a $2$-player game
corresponding to this $3$-states population protocol. Consider its
payoff matrix $M$. Let $M(q_0,q_0)=\beta_{0}$, $M(q_1,q_0)=\beta_{1}$
, $M(q_2,q_0)=\beta_{2}$. We must have $\beta_0 \ge \Delta=0, \beta_1
\ge \Delta=0$ since all agents that interact with
an agent in state $q_0$ must change their state. Now, since $q_0$
changes to $q_1$, $q_1$ must be a strictly better response to $q_0$ than
$q_2$: hence, we must have $\beta_1 > \beta_2$.  In a similar way,
since $q_1$ changes to $q_2$, we must have $\beta_2 > \beta_0$ , and
since $q_2$ changes to $q_0$, we must have $\beta_0 > \beta_1$. From
$\beta_1 > \beta_2 > \beta_0$ we reach a contradiction.
\end{proof}

This indeed motivates the following study, where we discuss which problems
admit a Pavlovian solution.

\subsection{Basic Protocols}

\begin{proposition} There is a Pavlovian protocol that computes the
  logical $OR$ (resp. $AND$) of input bits.
\end{proposition}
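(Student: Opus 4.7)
The plan is to exhibit explicit two-state protocols for $OR$ and $AND$, verify their correctness by a short fairness argument, and then obtain the Pavlovian property essentially for free from the previous theorem (which already guarantees that every symmetric deterministic 2-state protocol is Pavlovian).

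For $OR$, I would take $\Sigma = Q = \{0,1\}$, with $\iota$ and $\omega$ both the identity, and the single non-trivial transition
\[
0\,1 \;\to\; 1\,1,
\]
all other pairs being left unchanged. Correctness splits into two cases. If the initial multiset contains no $1$, then no rule ever fires and every agent permanently outputs $0 = OR(\text{input})$. If there is at least one $1$, then the $1$'s are never destroyed and the rule $0\,1\to 1\,1$ strictly decreases the number of $0$'s each time it fires; fairness forces it to fire as long as a $0$ and a $1$ coexist, so after finitely many steps every agent is in state $1$ and outputs $1 = OR(\text{input})$. The $AND$ protocol is completely symmetric: take the single non-trivial rule $1\,0 \to 0\,0$ and rerun the argument with the roles of $0$ and $1$ swapped.

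To verify the Pavlovian property, I would simply appeal to the theorem proved just above, which says any symmetric deterministic two-state protocol is Pavlovian. If a more concrete witness is desired, I would exhibit payoff matrices (with threshold $\Delta = 0$): for $OR$,
\[
M_{OR} \;=\; \begin{pmatrix} 1 & -1 \\ 1 & 1 \end{pmatrix},
\]
and for $AND$,
\[
M_{AND} \;=\; \begin{pmatrix} 1 & 1 \\ -1 & 1 \end{pmatrix},
\]
where rows/columns are indexed by $0,1$. In each case one checks the definition of the associated protocol: an agent keeps its state when its entry is $\ge 0$ and switches to the (unique) other state (which is its unique best response among different strategies) when its entry is $<0$. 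This reproduces exactly the transitions above.

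There is no real obstacle here: the content is entirely in (i) writing down the right two-state rule and (ii) noting that the earlier theorem immediately yields the Pavlovian structure. The only minor care needed is the fairness step ensuring that a mixed configuration actually stabilizes, but that is standard once one observes that the non-trivial rule strictly decreases the count of the "losing" symbol.
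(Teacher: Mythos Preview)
Your proof is correct and follows the same route as the paper: write down the natural two-state rule and invoke the previous theorem that every symmetric deterministic $2$-state protocol is Pavlovian. One small point of phrasing: for the protocol to be \emph{symmetric} (so that the theorem applies) you also need the mirror rule $1\,0\to 1\,1$ (resp.\ $0\,1\to 0\,0$), which the paper lists explicitly; your payoff matrices already encode both directions, so nothing is actually missing.
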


\begin{proof} 
Consider the following protocol to compute $OR$, 

\begin{equation}
\left\{
\begin{array}{lll}
01 & \to & 11 \\
10 & \to & 11  \\
00 & \to & 00 \\
11 & \to & 11\\
 \end{array}
\right.
\end{equation}

and the following protocol to compute $AND$, 

\begin{equation}
\left\{
\begin{array}{lll}
01 & \to & 00 \\
10 & \to & 00  \\
00 & \to & 00 \\
11 & \to & 11\\
 \end{array}
\right.
\end{equation}

Since they are both deterministic 2-states population protocols,
they are
Pavlovian. 

\end{proof}

\begin{remark}
Notice that $OR$  (respectively $AND$) protocol corresponds to the predicate $[x.0 \ge 1]$ (resp. $[x.0 =0]$), where $x$ is the input. A simple change of notation yields a  protocol to compute $[x.\sigma \ge 1]$ and $[x.\sigma =0]$ for any input symbol $\sigma$.
\end{remark}

\begin{remark}
All previous protocols are ``naturally broadcasting'' i.e., eventually all agents
agree on some (the correct) value. With previous definitions (which are the
classical ones for population protocols), the  following protocol
does not compute the $XOR$ or input bits, or equivalently does not
compute predicate $[x.1 \equiv
1~ (mod~2)]$.  \begin{equation}
\left\{
\begin{array}{lll}
01 & \to & 01 \\
10 & \to & 10  \\
00 & \to & 00 \\
11 & \to & 00\\
 \end{array}
\right.
\end{equation}

Indeed, the answer is not eventually known
by all the agents. It computes the $XOR$ in a weaker form i.e., eventually, all
agents will be in state $0$, if the $XOR$ of input bits is $0$, or
eventually only one agent will be in state $1$, if the $XOR$ of input
bits is $1$.
\end{remark}

\begin{proposition} There is a Pavlovian protocol that computes the threshold predicate $[x.\sigma \ge 2]$, which is true when there are at least $2$ occurrences of input symbol $\sigma$ in the input $x$.
\end{proposition}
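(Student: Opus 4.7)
The plan is to use a three-state Pavlovian protocol with $Q = \{0, 1, 2\}$, input mapping $\iota(\sigma) = 1$ and $\iota(\tau) = 0$ for every other input symbol $\tau$, and output function $\omega(0) = \omega(1) = 0$, $\omega(2) = 1$. Intuitively, state $1$ records ``I am a witness for $\sigma$'', state $2$ records ``the threshold $2$ has been reached'', and $2$ is meant to broadcast to the whole population.

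The intended deterministic transitions are $0,0 \to 0,0$ and $0,1 \to 0,1$ (nothing happens when no pair of witnesses meets), $1,1 \to 2,2$ (two witnesses combine to reach the threshold), and $0,2 \to 2,2$, $1,2 \to 2,2$, $2,2 \to 2,2$ (state $2$ is absorbing and broadcasts). Correctness will then follow from a standard fairness argument: if there are fewer than two $\sigma$'s in the input, the initial configuration is already stable with output $0$; otherwise, by fairness two agents in state $1$ eventually interact and produce state $2$, which in turn spreads by fairness to every agent, yielding the output $1$.

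The core step is to exhibit a symmetric payoff matrix inducing exactly these rules. I would propose, with threshold $\Delta = 0$,
\[
M = \begin{pmatrix} 0 & 0 & -1 \\ 0 & -1 & -1 \\ 1 & 1 & 1 \end{pmatrix},
\]
indexed by $0, 1, 2$. Row $2$ is nonnegative everywhere, so state $2$ never moves, which covers every rule in which a $2$ sits on the left. State $0$ satisfies $M_{0,0}, M_{0,1} \ge 0$ (stability against $0$ and $1$) and $M_{0,2} < 0$ (forced move against $2$), and $BR_{\neq 0}(2) = \{2\}$ since $M_{2,2} = 1 > -1 = M_{1,2}$. State $1$ satisfies $M_{1,0} \ge 0$ and $M_{1,1}, M_{1,2} < 0$, with $BR_{\neq 1}(1) = \{2\}$ because $M_{2,1} = 1 > 0 = M_{0,1}$, and $BR_{\neq 1}(2) = \{2\}$ because $M_{2,2} = 1 > -1 = M_{0,2}$. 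These checks exhaust the nine ordered pairs and show that the induced Pavlovian protocol is exactly the intended one.

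The main obstacle is precisely the design of $M$: one must arrange the sign pattern so that state $2$ is absorbing, both $0$ and $1$ react to $2$ by moving, $1$ reacts to itself by moving, and in each forced move the \emph{unique} best alternative among the remaining two states is $2$ rather than the third state. Once such a matrix is found, the verification becomes mechanical, and correctness of the protocol reduces to the broadcast-by-fairness argument sketched above.
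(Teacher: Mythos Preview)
Your proposal is correct and essentially identical to the paper's own proof: the paper uses the same three-state protocol (with the state you call $1$ named $\sigma$), the same transition table, and exactly the same payoff matrix $M$ with threshold $\Delta=0$. Your verification that $M$ induces the desired rules is slightly more detailed than the paper's, but the construction and the correctness argument are the same.
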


\begin{proof} 
The following protocol is a solution taking
\begin{itemize}\item $\Sigma=\{0,\sigma\}$,
  $Q=\{0,\sigma,2\}$, 
\item $\omega(0)=\omega(\sigma)=0$, 
\item $\omega(2)=1$. \end{itemize}

\begin{equation} \label{ode:dynamic}
\left\{
\begin{array}{lll}
00 \to & 00\\
0\sigma \to & 0\sigma\\
\sigma0 \to & \sigma0\\
02 \to & 22\\
20 \to & 22\\
\sigma\sigma \to & 22\\
\sigma2 \to & 22\\
2\sigma \to & 22\\
22 \to & 22\\
 \end{array}
\right.
\end{equation}
Indeed, if there is at least two $\sigma$, then by fairness and by the rule number $6$, they will ultimately be changed into two $2$s. Then $2$s will turn all other agents into $2$s. Now, this is the only way to create a $2$.

This is a Pavlovian protocol since it corresponds to the following payoff matrix.

\begin{center}
\begin{tabular}{llccc} 
  &   & \multicolumn{3}{c}{Opponent}  \\
  &  &  { \textsc{$0$}}     & { \textsc{$\sigma$}  } & {\textsc{$2$}}  
\rule[-7pt]{0pt}{20pt}\\ \cline{3-5}
 \multirow{3}{*}{Player} & \textsc{$0$} &  \multicolumn{1}{|c}{$0$}
 & \multicolumn{1}{c}{$0$}
 &\multicolumn{1}{c|}{$-1$}\rule[-7pt]{0pt}{20pt}\\

 & \textsc{$\sigma$} &  \multicolumn{1}{|c}{$0$}
 & \multicolumn{1}{c}{$-1$}
 &\multicolumn{1}{c|}{$-1$}\rule[-7pt]{0pt}{20pt}\\

 & \textsc{$2$} &  \multicolumn{1}{|c}{$1$}
 & \multicolumn{1}{c}{$1$}
 &\multicolumn{1}{c|}{$1$}\rule[-7pt]{0pt}{20pt}\\
\cline{3-5}

\end{tabular}
\end{center}

\end{proof}

Hence, Pavlovian population protocols can count until $2$. We will prove later on that they can not count until $3$.

\subsection{Leader Election}

The classical solution \cite{AspnesADFP2004} to the leader election problem (starting
from a configuration with $\ge 1$ leaders, eventually exactly one leader
survives) is the following:

\begin{equation} 
\left\{
\begin{array}{lll}
LL & \to & LN \\
LN & \to & LN  \\
NL & \to & NL \\
NN & \to & NN\\
 \end{array}
\right.
\end{equation}

Notice that we use the terminology ``leader election'' as in \cite{AspnesADFP2004} for this protocol, but that it may be considered more as a ``mutual exclusion'' protocol.

Unfortunately, this protocol is  non-symmetric, and hence
non-Pavlovian.

\begin{remark} Actually, the problem is
with the first rule, since one wants two leaders to become only
one. If the two leaders are identical, this is clearly problematic with symmetric rules.
\end{remark}

However, the leader election problem can actually be solved by a Pavlovian
protocol, at the price of a less trivial protocol.




\begin{proposition}
The following Pavlovian protocol solves the leader election problem,
as soon as the population is of size $\ge 3$.
\begin{equation} \label{ode:dynamicd}
\left\{
\begin{array}{lll}
L_1L_2 & \to & L_1N \\
L_1N& \to & NL_2\\
L_2N & \to & NL_1 \\
NN & \to & NN\\
L_2L_1 & \to & N L_1 \\
N L_1 & \to &  L_2 N \\
N L_2 & \to &  L_1  N\\
L_1 L_1 & \to &L_2 L_2 \\
L_2 L_2 & \to &L_1 L_1 \\
 \end{array}
\right.
\end{equation}
\end{proposition}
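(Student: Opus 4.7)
My plan is to handle two claims separately: first, that the protocol is Pavlovian (i.e.\ arises from a symmetric game), and second, that under any fair execution with $n \ge 3$ and at least one initial leader it stabilizes with exactly one leader.

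For the Pavlovian part I would exhibit an explicit payoff matrix $M$ on strategies $\{L_1, L_2, N\}$ and verify rule by rule that the recipe ``$q$ keeps when $M_{q,q'} \ge 0$; otherwise $q$ moves to the unique element of $BR_{\neq q}(q')$'' reproduces each of the nine listed rules. Each rule imposes a sign constraint on one row entry together with a strict inequality between two other entries in the same row (so that the best response is unique and matches the prescribed output). For instance, $L_1 L_1 \to L_2 L_2$ demands $M_{L_1,L_1} < 0$ with $M_{L_2,L_1} > M_{N,L_1}$; $L_1 L_2 \to L_1 N$ demands $M_{L_1,L_2} \ge 0$, $M_{L_2,L_1} < 0$ and $M_{N,L_1} > M_{L_1,L_1}$; and so on. All nine constraints are simultaneously realizable, for instance by
\[
M \;=\; \begin{pmatrix} -3 & 1 & -1 \\ -1 & -3 & -1 \\ -2 & -2 & 1 \end{pmatrix},
\]
with rows and columns indexed by $L_1, L_2, N$, and a short case analysis over the nine pairs of inputs confirms the exact match.

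The leader-election analysis hinges on the invariant $\ell := \#L_1 + \#L_2$ (the total number of leaders). Inspection of the nine rules shows that $\ell$ is never increased, and is strictly decreased only by $L_1 L_2 \to L_1 N$ (and its symmetric twin $L_2 L_1 \to N L_1$), which requires both leader types to be simultaneously present. Hence $\ell$ is monotonically non-increasing and therefore stabilizes at some value $k$; since the initial $\ell$ is $\ge 1$, we have $k \ge 1$. The substance is to rule out $k \ge 2$ by a fairness argument: after the stabilization time the eliminating rule never fires, so at every subsequent configuration only one leader type is present (otherwise fairness would force the encounter and drop $\ell$). I would then split on $\#N$. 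If $\#N \ge 1$ at some such time, a swap rule $L_i N \to N L_{3-i}$ is enabled, so by fairness it fires and produces a mixed configuration, contradicting the stability of $\ell$. If $\#N = 0$ throughout the stabilized phase, then $\ell = n \ge 3$, so there are at least three same-type leaders (say $L_1$), and fairness forces $L_1 L_1 \to L_2 L_2$, leaving $n-2 \ge 1$ agents in $L_1$ alongside two new $L_2$s, again a mixed configuration, contradiction. Thus $k = 1$; and once $\ell = 1$ no rule with two leaders on its left-hand side is enabled, so only the swap rules involving the lone leader can fire, and these preserve $\ell = 1$ forever.

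The main obstacle is the fairness case analysis above, and in particular pinpointing where the hypothesis $n \ge 3$ becomes essential: with $n = 2$ and two same-type leaders only the flip rules $L_1 L_1 \leftrightarrow L_2 L_2$ ever fire, $\ell$ remains at $2$ forever, and no leader is elected. This both explains the need for $n \ge 3$ and shows it is tight, as the two-subcase split of the progress argument breaks down precisely in that configuration.
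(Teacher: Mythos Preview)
Your proposal is correct and follows essentially the same approach as the paper: both exhibit an explicit payoff matrix to witness the Pavlovian property, and both argue correctness via the non-increasing leader count $\ell$ together with the observation that the non-eliminating rules let one convert between $L_1$ and $L_2$ so that, whenever $\ell\ge 2$, a configuration enabling $L_1L_2\to L_1N$ is reachable and hence, by fairness, eventually taken. Your payoff matrix differs from the paper's (the paper uses entries $-3,0,-3;\ -1,-3,-3;\ -2,-3,0$) but both satisfy the required sign and best-response constraints; and your write-up is in fact more explicit than the paper's about where the hypothesis $n\ge 3$ enters (the all-leaders, single-type case), which the paper leaves implicit.
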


\begin{proof}
Indeed, starting from a configuration containing not only $N$s,
eventually after some time configurations will have exactly one
leader, that is one agent in state $L_1$ or $L_2$.

Indeed, the first rule and the fifth rule decrease strictly the
number of leaders whenever there are more than two leaders. Now the
other rules, preserve the number of leaders, and are made such that an $L_1$ can always be transformed
into an $L_2$ and vice-versa, and hence are made such that a configuration where
first or fifth rule applies can always be reached whenever there are
more than two leaders. The fact that it solves the leader election
problem then 
follows from the hypothesis of fairness in the definition of
computations.

This is a Pavlovian protocol, since it corresponds to the following
payoff matrix.
 
\begin{center}
\begin{tabular}{llccc} 
  &   & \multicolumn{3}{c}{Opponent}  \\
  &  &  { \textsc{$L_1$}}     & { \textsc{$L_2$}  } & {\textsc{N}}  
\rule[-7pt]{0pt}{20pt}\\ \cline{3-5}
 \multirow{3}{*}{Player} & \textsc{$L_1$} &  \multicolumn{1}{|c}{$-3$}
 & \multicolumn{1}{c}{$0$}
 &\multicolumn{1}{c|}{$-3$}\rule[-7pt]{0pt}{20pt}\\

 & \textsc{$L_2$} &  \multicolumn{1}{|c}{$-1$}
 & \multicolumn{1}{c}{$-3$}
 &\multicolumn{1}{c|}{$-3$}\rule[-7pt]{0pt}{20pt}\\

 & \textsc{$N$} &  \multicolumn{1}{|c}{$-2$}
 & \multicolumn{1}{c}{$-3$}
 &\multicolumn{1}{c|}{$0$}\rule[-7pt]{0pt}{20pt}\\
\cline{3-5}

\end{tabular}
\end{center}
\end{proof}

\subsection{Majority}

\begin{proposition}
The majority problem (given some population of input symbols $\sigma$ and $\sigma'$, determine
whether there are more $\sigma$ than $\sigma'$) can be solved by a Pavlovian population protocol.
\end{proposition}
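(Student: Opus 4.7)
Plan: The strategy is to exhibit an explicit Pavlovian protocol computing majority and then verify both its correctness and its Pavlovian nature. I would first choose a small state space, most naturally of the form $\{A, a, b, B\}$ with input map $\iota(\sigma)=A$, $\iota(\sigma')=B$ and output $\omega(A)=\omega(a)=1$, $\omega(b)=\omega(B)=0$, in which $A, B$ act as ``strong'' votes and $a, b$ as ``weak'' votes. The intended rules combine a cancellation of opposing strong pairs (some rule acting on the multiset $\{A, B\}$) with a propagation mechanism by which the surviving majority eventually converts the remaining agents to its output value; as in the leader-election construction above, auxiliary paired states may be necessary to realize the required transitions in a symmetric way, so the actual state space might need to be slightly larger than four states.

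To verify Pavlovianness, I would exhibit an explicit payoff matrix $M$ on $Q$ and check, for each ordered pair $(q_1, q_2)$ of states, that the induced best-response dynamics with threshold $0$ coincides with the designed transition. This reduces to a finite collection of sign and ordering inequalities on the entries of $M$, whose simultaneous satisfiability is the crux of the construction. Correctness of the resulting protocol is then established in the standard population-protocol style: exhibit a potential (essentially tracking the signed excess $\#\sigma - \#\sigma'$) that is preserved by every rule of the protocol, and use the fairness assumption to conclude that every fair execution eventually stabilizes on the correct majority value.

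The main difficulty is the first step. The Pavlovian constraint is highly restrictive, since an agent's reaction is determined entirely by its opponent's state, as the best response among states different from its own. In particular, combining a cancellation rule $AB \to ab$ with a one-sided propagation rule $Ab \to AA$ is already inconsistent in the Pavlovian framework: writing out the best-response inequalities these rules impose on column $A$ of $M$ yields $M_{b,A} \geq M_{a,A} \geq M_{A,A}$ together with $M_{a,A} < 0$, which conflicts with the constraint $M_{A,A} \geq 0$ forced by the inert rule $AA \to AA$. Finding a Pavlovian rule set that simultaneously implements a correct majority computation is therefore the real technical content, and paired auxiliary states in the spirit of the $L_1, L_2$ trick of the preceding leader-election protocol are the natural ingredient for overcoming this rigidity.
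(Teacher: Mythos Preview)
Your plan is essentially the paper's approach: exhibit an explicit four-state protocol with two ``strong'' vote states and two ``weak'' ones, verify correctness via the invariant $\#\sigma-\#\sigma'$ and fairness, and certify Pavlovianness by writing down a payoff matrix. What is missing is the execution, and in particular the resolution of the tension you correctly detect between cancellation and propagation.

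The paper shows that four states already suffice; no $L_1,L_2$-style auxiliary pairing is needed. The two tricks that dissolve your obstruction are:
\begin{itemize}
\item Propagation does \emph{not} convert a weak vote into a strong one. Instead, a strong vote merely flips the weak vote's opinion: in your notation, $Ab\to Aa$ and $Ba\to Bb$ (rather than $Ab\to AA$). This keeps $M_{A,A}\ge 0$ compatible with what happens in column $A$, because the agent switching out of $b$ only needs $a$ (not $A$) to be best among $\{A,a,B\}$.
\item Cancellation is asymmetric across the diagonal: the rule is $AB\to ba$ (the $A$-agent becomes the weak \emph{no}, the $B$-agent becomes the weak \emph{yes}), not the symmetric $AB\to ab$. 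Together with a tie-breaking rule $ba\to aa$, this both handles the equal-count case and makes the column inequalities for $A$ and $B$ simultaneously satisfiable.
\end{itemize}
With these choices the paper writes down an explicit $4\times 4$ matrix and checks each transition. Your inequality chain $M_{b,A}\ge M_{a,A}\ge M_{A,A}$ with $M_{a,A}<0$ arose from the rule $Ab\to AA$, which is precisely the rule one should \emph{not} use; once propagation targets the weak state, the conflict disappears. So your diagnosis of where the difficulty lies is accurate, but the remedy is a different choice of rules on the same four states, not an enlargement of the state space.
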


\begin{remark} If one prefers, the predicate $[x.\sigma \ge x.\sigma']$, where $\sigma$ and $\sigma'$ are two input symbols, and $x$ is the input, is computable by a Pavlovian
population protocol.
\end{remark}

\begin{proof}

  We claim that the following  protocol outputs $1$ if there
  are more $\sigma$ than $\sigma'$ in the initial configuration and $0$
  otherwise, 

\begin{equation} 
\left\{
\begin{array}{lll}
NY & \to & YY \\
YN & \to & YY \\

N\sigma & \to & Y\sigma \\
\sigma N & \to & \sigma Y \\

Y\sigma' & \to & N\sigma' \\
\sigma' Y & \to & \sigma' N \\

\sigma\sigma' & \to & NY \\
\sigma'\sigma & \to & YN \\
 \end{array}
\right.
\end{equation}

taking 
\begin{itemize}
\item $\Sigma = \{\sigma,\sigma'\}, Q = \{\sigma,\sigma',Y,N\}$, 
\item $\omega(\sigma) = \omega(Y) = 1$,
\item $\omega(\sigma') = \omega(N) = 0$.
\end{itemize}

In this protocol, the states $Y$ and $N$ are ``neutral'' elements for
our predicate but they should be understood as \emph{Yes} and
\emph{No}. They are the ``answers'' to the question: are there more $0$s
than $1$s.

This protocol is made such that the numbers of $\sigma$ and $\sigma'$ are
preserved except when a $\sigma$ meets a $\sigma'$. In that latter case, the two
agents are deleted and transformed into a $Y$ and a $N$. 

If there are initially strictly more $\sigma$ than $\sigma'$, from the fairness
condition, each $\sigma'$ will be paired with a $\sigma$ and at some point no
$\sigma'$ will left. By fairness and since there is still at least a
$\sigma$, a configuration containing only $\sigma$ and $Y$s will be
reached. Since in such a configuration, no rule can modify the state
of any agent, and since the output is defined and equals to $1$ in
such a configuration, the protocol is correct in this case

By symmetry, one can show that the protocol outputs $0$ if there are
initially strictly more $\sigma'$ than $\sigma$. 

Suppose now that initially, there are exactly the same number of $\sigma$ and $\sigma'$. By fairness, there exists a step when no more
agents in the state $\sigma$ or $\sigma'$ left. Note that at the moment where the last
$\sigma$ is matched with the last $\sigma'$, a $Y$ is created. Since this $Y$ can
be ``broadcasted'' over the $N$s, in the final configuration all
agents are in the state $Y$ and thus the output is correct.

This protocol is Pavlovian, since it corresponds to the following
payoff matrix. 

\begin{center}
\begin{tabular}{llcccc} 
  &   & \multicolumn{4}{c}{Opponent}  \\
  &  &  \multicolumn{1}{c}{ ~\textsc{N}~~} & \textsc{Y} & \textsc{$\sigma$} & { \textsc{$\sigma'$} } \\ \cline{3-6}
 & \textsc{N} &  \multicolumn{1}{|c}{~$1$~~} & $-1$  & $-1$
  &\multicolumn{1}{c|}{$1$}\\

Player~~~ & \textsc{Y} &  \multicolumn{1}{|c}{~$0$~~} & $1$ & $1$
&\multicolumn{1}{c|}{$-1$}\\

&  \textsc{$\sigma$}   &  \multicolumn{1}{|c}{~$0$~~} & $0$ & $0$
&\multicolumn{1}{c|}{$-1$}\\ 

&  \textsc{$\sigma'$}   &  \multicolumn{1}{|c}{~$0$~~} & $0$ &$-1$
&\multicolumn{1}{c|}{$0$}\\

\cline{3-6}
\end{tabular}
\end{center}

\end{proof}

\section{Bounds on the Power of Pavlovian Population Protocols}
\label{sec:bound}

We proved that predicates $[x.\sigma=0]$, $[x.\sigma \ge 1]$, $[x.\sigma \ge 2]$ can be computed by some Pavlovian population protocols, as well as $[x.\sigma \ge x.\sigma']$.

It is clear that the subset of the predicates computable by
Pavlovian population protocols is closed by negation: just switch the
value of the individual output function of a protocol computing a
predicate to get a protocol computing its negation.

Notice that, unlike what happens for general population protocols, 
composing Pavlovian population protocols into a Pavlovian population
protocol is not easy. It is not clear whether Pavlovian
computable predicates are closed by conjunctions: classical
constructions for general population protocols can not be used directly.

The power of Pavlovian population protocols is actually rather limited as they can count up to $2$, but not $3$.

\begin{theorem}
There is no Pavlovian protocol that computes the threshold predicate $[x.\sigma \ge 3]$, which is true when there are at least $3$ occurrences of input symbol $\sigma$ in the input $x$.
\end{theorem}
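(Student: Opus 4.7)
The plan is to suppose for contradiction that a Pavlovian protocol $P = (Q, \Sigma, \iota, \omega, \delta)$ with payoff matrix $M$ computes $[x.\sigma \ge 3]$, and derive a contradiction from the behavior on the input consisting of $4$ copies of $\sigma$ (for which the predicate value is $1$). Write $q_0 = \iota(\sigma)$.

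First I would analyze the two-agent dynamics starting from $(q_0, q_0)$, which by correctness must stabilize to output $0$. Because the Pavlovian rule on a like pair $(q,q)$ sends both agents to a common state, the two-agent trajectory projects onto a walk in a directed graph $H$ on $Q$ whose edges are $q \to q'$ iff $(q,q,q',q') \in \delta$. By the Pavlovian structure, $q$ carries a self-loop in $H$ exactly when $M(q,q) \ge 0$; otherwise its out-neighbors lie in $BR_{\neq q}(q)$. Correctness of the $n = 2$ case forces every fair walk starting at $q_0$ to enter a terminal strongly connected component of $H$ all of whose states satisfy $\omega = 0$.

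The key step is to produce a reachable state $q^* \in Q$ with $\omega(q^*) = 0$ and $M(q^*, q^*) \ge 0$, that is, a self-loop in $H$ at an output-$0$ state. Given such a $q^*$, the rest is short. Starting from the $n = 4$ configuration $(q_0, q_0, q_0, q_0)$, schedule only interactions between agents $1$ and $2$ to replay the $H$-path $q_0 \to \cdots \to q^*$, reaching $(q^*, q^*, q_0, q_0)$; then do the same with agents $3$ and $4$ to reach $(q^*, q^*, q^*, q^*)$. Because $M(q^*, q^*) \ge 0$, the Pavlovian rule makes $q^* q^* \to q^* q^*$ the unique transition, so $(q^*, q^*, q^*, q^*)$ is a global fixed point; the execution that continues with $q^* q^*$ self-interactions forever is trivially fair and stabilizes to output $\omega(q^*) = 0$, contradicting the required output $1$.

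The main obstacle is therefore establishing the self-looping state $q^*$. If some reachable terminal SCC of $H$ contains a self-loop, we are done. The delicate sub-case is when every such terminal SCC is a non-trivial cycle with no self-loops, so the two-agent dynamics oscillates forever through output-$0$ states $\{r_1,\dots,r_k\}$ with $M(r_i,r_i) < 0$ and $BR_{\neq r_i}(r_i) \subseteq \{r_1,\dots,r_k\}$. I would handle this case by a separate argument that exploits the rigidity of the best-response constraints inside such a cycle: either the cross-pair rules $r_i r_j \to \cdot$ remain inside $\{r_1,\dots,r_k\}$, in which case a fair $n = 4$ execution stays trapped in output-$0$ configurations and directly contradicts correctness, or they introduce new states, in which case one shows that the $n = 3$ system cannot stabilize on a uniform output-$1$ configuration (it must get stuck in a mixed-output configuration similar to the obstruction one encounters when trying to make a leftover output-$0$ state vanish against absorbing output-$1$ states). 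Carrying out this cyclic-case analysis is the technical heart of the proof.
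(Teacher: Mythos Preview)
Your overall plan matches the paper's: follow the diagonal two-agent trajectory $\sigma\sigma \to q_1 q_1 \to q_2 q_2 \to \cdots$, locate its eventually periodic part, and split into the self-loop case versus the genuine-cycle case. Your self-loop argument is exactly dual to the paper's (you pin $\omega(q^*)=0$ from $n=2$ and contradict $n=4$; the paper pins $\omega(q_k)=1$ from $n=4$ and contradicts $n=2$), and it is fine.

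The gap is in your cycle case. You split further on whether the cross-pair rules $r_i r_j \to \cdot$ stay inside the cycle $T=\{r_1,\dots,r_k\}$, and for the branch where they do not you offer only a vague sketch about the $n=3$ system getting stuck in a mixed-output configuration. That sketch is not an argument, and it is not clear how to turn it into one: nothing you have set up prevents escaping cross-pair rules from eventually producing a correct stabilization on three agents.

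The paper sidesteps that branch entirely by arguing it never occurs: $T$ is \emph{closed} under all Pavlovian transitions. The point is that $q_j q_j \to q_{j+1} q_{j+1}$ forces $q_{j+1} = BR_{\neq q_j}(q_j)$, so the global best response $BR(q_j)$ lies in $\{q_j, q_{j+1}\}\subseteq T$; for $q_i,q_j\in T$ the new state $q'_i$ in $q_i q_j \to q'_i q'_j$ is then one of $q_i$, $q_j$, or $q_{j+1}$, hence again in $T$. With closure in hand, your own first-branch argument already finishes the proof: from four copies of $\sigma$ you reach $\{q_k,q_k,q_k,q_k\}$, all further states lie in $T$, and since every element of $T$ has $\omega=0$ by your $n=2$ analysis, the output never stabilizes to $1$. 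The missing idea is precisely this closure observation; the $n=3$ detour is unnecessary.
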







\begin{proof}
  We will prove this by contradiction. Assume there exists such a Pavlovian protocol. Without loss of generality we may assume that $\Sigma=\{0,\sigma\}$ is a subset of the set of states $Q$.

  As the protocol is Pavlovian, and hence symmetric, any rule $q q \to q' q''$, is such that $q'=q''$, that is to say of the form $qq \to q'q'$ for all $q \in Q$.

  
Let then consider the sequence of rules such that $ \sigma \sigma \to q_1 q_1 \to q_2 q_2 \to \dots \to q_k q_k \to \dots $ where $\sigma , q_1 q_2,q_3,\dots,q_k\in Q$.  


Since $Q$ is finite, there exist two distinct integers $k$ and $\ell$ such that $q_k=q_\ell$ and $k<\ell$.  

The case $k+1=\ell$ is not possible. Indeed, we would have the rule $q_k q_k \to q_k q_k$. Consider the inputs $x_3$ and $x_4$ such that $x_3= \{\sigma,\sigma\}$ and $x_4= \{\sigma,\sigma,\sigma,\sigma\}$.  $x_4$ must be accepted. From $x_4$ there is a derivation $x_4\to \{q_1,q_1, \sigma ,\sigma\} \to \{q_1 ,q_1,q_1 q_1\} \to^* \{q_k,q_k,q_k,q_k\}$. This latter configuration is terminal from the above rule.  Since $x_4$ must be accepted, we must have $\omega(q_k)=1$. However, from $x_3$ there is a derivation $x_3\to \{q_1,q_1\} \to^* \{q_k,q_k\}$, where the last configuration is also terminal. We reach a contradiction, since its output would be $\omega(q_k)=1$, whereas $x_3$ must be rejected.

Hence, $k+1<\ell$, and $q_k q_k \to q_{k+1} q_{k+1} \to \dots \to q_\ell q_\ell \to q_{k} q_{k} $. Let $T$ be then the set of  states $T=\{q_i : k \leq i \leq \ell\}$.

Since $ q_i q_i \to q_{i+1} q_{i+1}$ is among the rules, since the protocol is Pavlovian with a matrix $M$, and by definition of  Pavlov behavior, we must have $q_{i+1} = BR_{\neq q_i}(q_i)$ (with the convention that $q_{\ell+1}$ is $q_k$).  So, $BR(q_i)$ can be $q_{i+1}$ or 
$q_i$.

Let then discuss the rules 
\begin{equation} \label{eq:eqq}
q_i q_j \to q'_i q'_j
\end{equation}
for $q_i,q_j \in T$.

There are three possibilities for the value of $q'_i$:
\begin{enumerate}
\item  $ q'_i = q_i$ if  $M_{q_i q_j}\geq \Delta$
\item $ q'_i = q_{j}$,  if  $M_{q_i q_j}\ <\Delta$ and if $ q_j = BR_{\neq q_i}(q_j)$
\item $ q'_i = q_{j+1}$  if  $M_{q_i q_j}\ <\Delta$ and if $ q_{j+1} = BR_{\neq q_i}(q_j)$
\end{enumerate}
In any case, we see that the value of $q'_i$ is in $T$. 

Symmetrically, we have three possibilities for $q'_j$, all of them in $T$.

Hence, all rules of the form \eqref{eq:eqq} preserve $T$: we have $q'_i,q'_j \in T$, as soon as $q_i, q_j \in T$.

Consider still then the inputs $x_3$ and $x_4$ such that $x_3= \{\sigma,\sigma\}$ and $x_4= \{\sigma,\sigma,\sigma,\sigma\}$.  From $x_4$ there is a derivation $x_4\to \{q_1,q_1, \sigma ,\sigma\} \to \{q_1 ,q_1,q_1 q_1\} \to^* \{q_k,q_k,q_k,q_k\}$. From this last configuration, by above remark, the state of all agents will be in $T$. As $x_4$ must be accepted, ultimately all agents will be in states that belong to $T$ whose image by $\omega$ is $1$. Consider now $x_3$. From $x_3$ there is a derivation $x_3\to \{q_1,q_1\} \to^* \{q_k,q_k\}$ that then will go trough all configurations $\{q_i q_i\}$, for the $q_i \in T$  in turn. This can not eventually stabilize to elements whose image by $\omega$ is $0$,  as some of the elements of $T$ have image $1$ by $\omega$, and hence $x_3$ is not accepted. This yields a contradiction, and hence such a Pavlovian protocol can not exist.
\end{proof}













\section{The Power of Symmetric Population Protocols}
\label{sec:fin}




Pavlovian Population protocols are symmetric. We just proved that they have a very limited computational power.  However, assuming population protocols symmetric (not-necessarily Pavlovian) is not truly a restriction.

\begin{proposition}
Any  population protocol can be simulated by a symmetric  population
protocol, as soon as the population is of size $\ge 3$.
\end{proposition}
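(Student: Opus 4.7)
The plan is to construct a symmetric population protocol $P' = (Q', \Sigma, \iota', \omega', \delta')$ that simulates the given (possibly asymmetric) protocol $P = (Q, \Sigma, \iota, \omega, \delta)$ by attaching to each agent an extra bit that plays the role of a local ``tag'' distinguishing the two participants of an interaction. I would take $Q' = Q \times \{0,1\}$, $\iota'(\sigma) = (\iota(\sigma),0)$, and $\omega'(q,b) = \omega(q)$. For every original rule $(q_1, q_2, q'_1, q'_2) \in \delta$ I would put into $\delta'$ the \emph{simulation rule} $((q_1,0),(q_2,1),(q'_1,0),(q'_2,1))$ together with its mandatory swap $((q_2,1),(q_1,0),(q'_2,1),(q'_1,0))$, so that the bit-$0$ agent plays the first role of the original rule and the bit-$1$ agent plays the second role. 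Since initially all bits are equal, I would also include non-deterministic \emph{bit-swap rules}: for every $q_1, q_2 \in Q$ and $b \in \{0,1\}$, both $((q_1,b),(q_2,b),(q_1,1-b),(q_2,b))$ and $((q_1,b),(q_2,b),(q_1,b),(q_2,1-b))$, together with their swapped counterparts, belong to $\delta'$. By construction every rule of $\delta'$ appears with its swap, so $P'$ is symmetric.

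Next I would verify that $P'$ computes the same predicate as $P$. Let $\pi \colon Q' \to Q$ be the projection $(q,b) \mapsto q$. Every transition of $P'$ projects via $\pi$ either to a genuine $P$-transition (when a simulation rule fires) or to the identity (when a bit-swap fires), so the projected trace, with stutters removed, is a legitimate execution of $P$ starting from the same initial configuration. The delicate point is that a fair execution of $P'$ must yield a \emph{fair} projected execution of $P$. For this I would argue: whenever a $P$-configuration $C$ appears infinitely often in the projection and $C \to D$ in $P$ via some rule $(q_1, q_2) \to (q'_1, q'_2)$, one can, from any $P'$-configuration projecting to $C$, reach using only bit-swap rules a configuration in which the two specific agents playing the roles of $q_1$ and $q_2$ carry respectively bits $0$ and $1$ in the correct order; fairness of $P'$ then forces the corresponding simulation rule to fire infinitely often, producing infinitely many occurrences of $D$ in the projection.

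The main obstacle, and exactly where the hypothesis $n \geq 3$ is used essentially, is establishing this bit-rearrangement claim. A bit-swap changes the total number of $1$'s by $\pm 1$ and requires two equi-bit agents, so with $n = 2$ one may be permanently trapped in the configuration $\{(q_1,0),(q_2,1)\}$, in which no further bit-swap is possible and only one of the two orderings $(q_1,q_2)$, $(q_2,q_1)$ can ever be simulated. With $n \geq 3$ however, I would show by a short case analysis on the current bit pattern of a specified triple $\{A, B, C\}$ that a sequence of at most three bit-swaps, using $C$ as an auxiliary buffer, suffices to impose on the pair $(A,B)$ any desired opposite-bit assignment without altering any $Q$-component. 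Once this local controllability is in place, the proposition follows in the usual way: either the projected execution is infinite and fair in $P$, so it stabilises to the correct output by correctness of $P$, or it stabilises at a $P$-terminal configuration, whose output is already correct; since $\omega'$ ignores bits, the output of $P'$ stabilises to the same value.
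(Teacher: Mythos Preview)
Your construction is correct and follows essentially the same idea as the paper's own proof: double the state set by attaching to each state a one-bit ``role'' marker (the paper uses primed versus unprimed copies where you use $\{0,1\}$), let opposite-marker pairs simulate the original rules, and add auxiliary rules that shuffle the markers so that any required role assignment can be reached---which is exactly where the assumption $n\ge 3$ enters. Your toggle rules differ in detail (you flip one bit when two agents carry equal bits, whereas the paper flips both bits for equal same-state pairs and flips freely for different-state pairs), and your fairness argument is spelled out more carefully than the paper's, but the overall strategy is the same.
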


Before proving this proposition, we state the (immediate) main consequence.

\begin{corollary}
A predicate is computable by a symmetric population protocol if and only
  if it is semilinear.
\end{corollary}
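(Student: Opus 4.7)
The plan is to derive the corollary as an immediate consequence of the proposition combined with the characterization of unrestricted population protocols recalled as Theorem~1 in Section~\ref{section:pp}.

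For the \emph{only if} direction, a symmetric population protocol is, by definition, a population protocol in the sense of \cite{AspnesADFP2004}: the symmetry condition on $\delta$ is a restriction of the model rather than an extension. Any predicate computed by a symmetric protocol is therefore computed by some (unrestricted) population protocol, and hence, by the necessity half of Theorem~1 (due to \cite{AngluinAE2006semilinear}), it must be semilinear.

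For the \emph{if} direction, let $p$ be an arbitrary semilinear predicate. By the sufficiency half of Theorem~1, there exists a (possibly non-symmetric) population protocol $\mathcal{P}$ computing $p$. Applying the proposition to $\mathcal{P}$ yields a symmetric population protocol $\mathcal{P}'$ that faithfully simulates $\mathcal{P}$ on every population of size at least $3$, and consequently computes $p$ on every input multiset of cardinality at least $3$. For the finitely many initial configurations over $\Sigma$ of cardinality $1$ or $2$, the correct value of $p$ is a bounded lookup table, and can be absorbed into $\mathcal{P}'$ by hard-coding a handful of auxiliary initial states and symmetric transitions dedicated to small populations, without interfering with the dynamics on larger populations.

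The only genuine mathematical content lies in the proposition itself, not in this corollary: producing a symmetric simulation of an arbitrary population protocol requires breaking the symmetry of same-state rules $(q,q)\to(q'_1,q'_2)$ with $q'_1\neq q'_2$, which is precisely what forces the hypothesis $n\geq 3$. Once the proposition is granted, the equivalence follows with no further technical work beyond the routine bookkeeping described above.
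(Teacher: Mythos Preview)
Your argument is exactly what the paper intends: it labels the corollary the ``(immediate) main consequence'' of the proposition and offers no separate proof for it---the proof environment that follows the corollary in the paper is actually the proof of the \emph{proposition}, not of the corollary. Your additional paragraph on hard-coding the $n\le 2$ cases goes beyond what the paper bothers to say; be aware, though, that ``auxiliary initial states dedicated to small populations'' is not really well-defined, since $\iota:\Sigma\to Q$ cannot see $n$, so if you want that step to be airtight you would need a uniform construction rather than a size-dependent one---the paper simply leaves the $n\ge 3$ caveat implicit.
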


\begin{proof}
To a population protocol $(Q,\Sigma,\iota,\omega,\delta)$, with
$Q=\{q_1,\cdots,q_n\}$ associate population protocol
$(Q \cup Q',\Sigma,\iota,\omega,\delta')$ with
$Q'=\{q'_1,\cdots,q_n'\}$, $\omega(q')=\omega(q)$ for all $q \in Q$, and for all rules
$$qq \to \alpha \beta$$ in $\delta$, the following rules in $\delta'$:
$$
\left\{
\begin{array}{lll}
qq' &\to& \alpha\beta \\
q'q &\to& \beta\alpha \\
qq &\to& q'q' \\
q'q' &\to& qq \\
q \gamma &\to & q' \gamma \\
q' \gamma &\to& q \gamma \\
\gamma q & \to & \gamma q' \\
\gamma q' & \to & \gamma q \\
\end{array}
\right.
$$
for all $\gamma \in Q \cup Q', \gamma \neq q, \gamma \neq q'$,
and for all pairs of rules
$$
\left\{
\begin{array}{lll}
qr &\to& \alpha\beta \\
rq &\to& \delta\epsilon \\
\end{array}
\right.
$$
with $q,r \in Q$, the following rules in $\delta'$:
$$
\left\{
\begin{array}{lll}
qr' &\to& \alpha\beta \\
r'q & \to & \beta \alpha \\
rq' & \to & \delta \epsilon \\
q'r & \to & \epsilon \delta. \\
\end{array}
\right.
$$
The obtained population protocol is clearly symmetric. Now the first
set of rules guarantees that a state in $Q$ can always be converted to
its primed version in $Q'$ and vice-versa. By fairness, whenever a
rule $qq \to \alpha\beta$ (respectively $qr \to \alpha\beta$) can be applied, then the corresponding two first
rules of the first set of rules (resp. of the second set of rules) can
eventually be fired after possibly some conversions of states into their
primed version or vice-versa.
\end{proof}

\bibliographystyle{plain}

\end{document}